\DeclareMathOperator*{\Ccup}{\text{\large$\cup$}}
\def\sss{\scriptscriptstyle}
\def\bo{\boldsymbol}
\def\blue{\color{blue}}
\def\DEF{ \mathrel{\mathop:}=}
\def\FED{=\mathrel{\mathop:} }
\def\scr{\mathscr}
\def\const{\mathrm{const}}
\def\ie{i.\kern0.1667em e.\@}
\def\bop{\bo{\frak p}}
\def\po{\frak p_\circ}
\def\frak{\mathfrak}
\def\pa{\partial}
\def\ri{{\mathrm i}}
\newtheorem{proposition}{Proposition}
\newtheorem{remark}{Remark}
\newtheorem*{definition}{Definition}
\renewcommand{\,}[1][1]{%
\ifmmode\mskip#1\Ymuskip\else\kern#1\dimexpr0.05555em\relax\fi}
\renewcommand{\!}[1][1]{\,[-#1]}
\newcommand{\goto}%
{\mathrel{{\mathchar"439}{\mathchar"439}\mathchar"044B}}
\newcommand{\mbig}[2][4]{\vcenter{\hbox{%
\ifcase#1$#2$\or\small$\big#2$\or$\big#2$\or\large$\big#2$%
\or\footnotesize$\Big#2$\or\small$\Big#2$\or$\Big#2$\or%
\large$\Big#2$\or$\bigg#2$\or\large$\bigg#2$\or$\Bigg#2$\or%
\large$\Bigg#2$\or$\left#2\vbox to 4.8ex{}\right.$\else%
${\left#2\vbox to 5.3ex{}\right.}$\fi}}\relax}
\title[Exactly solvable wave fronts]%
{An exactly solvable problem of wave fronts\\[0.5ex]
and applications to the asymptotic theory}
\author{Yu.~V.~Brezhnev}
\author{A.~V.~Tsvetkova}
\begin{document}
\maketitle
\thispagestyle{empty}

\begin{abstract}
This is the full and extended version of the brief note
\href{https://arxiv.org/abs/1908.00938}
{\blue\texttt{arxiv.org/abs/1908.00938}}. A nontrivially solvable
4-dimensional Hamiltonian system is applied to the problem of wave
fronts and to the asymptotic theory of partial differential
equations. The Hamilton function we consider is $H(\mathbf x,\mathbf
p)=\sqrt{D(\mathbf{x})}|\mathbf{p}|$. Such Hamiltonians arise when
describing the fronts of linear waves generated by a localized
source in a basin with a variable depth. We consider two
\emph{realistic} types of bottom shape: 1) the depth of the basin is
determined, in the polar coordinates, by the function
$D(\varrho,\varphi)=(\varrho^2+b)/(\varrho^2+a)$ and 2) the depth
function is $D(x,y)=(x^2+b)/(x^2+a)$. As an application, we
construct the asymptotic solution to the wave equation with
localized initial conditions and asymptotic solutions of the
Helmholtz equation with a localized right-hand side.
\end{abstract}

\tableofcontents

\section*{Introduction}

In the present paper we consider an exactly solvable Hamiltonian
system
\begin{equation}\label{Ham_syst}
\dot{\mathbf{x}}=H_\mathbf{p}\,,\quad\dot{\mathbf{p}}=-H_\mathbf{x}\,,\quad
\mathbf x,\mathbf p\in\mathbb R^2
\end{equation}
with the Hamiltonian
\begin{equation}\label{Ham}
H=|\mathbf{p}|C(\mathbf{x}),\quad C(\mathbf{x})=\sqrt{g D(\mathbf{x})}
\end{equation}
and an associated problem of the wave fronts under the initial
conditions
\begin{equation}\label{cond}
\mathbf{p}|_{t=t_\circ}=\mathbf{n}(\psi)=\begin{pmatrix} \cos\psi\\
\sin\psi\end{pmatrix}, \quad \psi\in[0,2\pi],\quad
\mathbf{x}|_{t=t_\circ}=\mathbf{x}^{\circ}\in \mathbb{R}^2\,.
\end{equation}
The function $D(\mathbf{x})$ is assumed to be one of two types,
which will be given bellow. Solutions to these problems can then be
represented in terms of elliptic functions \cite{Ahiezer}.

Among other things the Hamiltonian systems of that sort arise in the
study of waves generated by a localized source. This problem, in
turn, appears in modeling the tsunami waves \cite{Peli} and
mesoscale eddies in the ocean. It is well known \cite{Stoker, Peli}
that the long-wave approximation in an infinite basin with a
variable depth defined by the function $D(\mathbf{x})>0$ leads to
the linear wave equation of the form
\begin{equation}\label{WEq1}
\frac{\pa^2 w}{\pa t^2}- \big\langle \nabla,
C^2(\mathbf{x})\nabla \big\rangle w=0,\quad
C^2(\mathbf{x})=gD(\mathbf{x}),\quad \mathbf{x}\in \mathbb{R}^2\,.
\end{equation}
Here, $g$ is the gravity acceleration, $\nabla$ denotes the
two-dimensional gradient, $\langle\cdot,\cdot\rangle$ is the usual
scalar product, so $\big\langle \nabla, C^2(\mathbf{x})\nabla
\big\rangle w=C^2(\mathbf{x})\Delta w
+2C(\mathbf{x})(C'_{x_1}w'_{x_1}+C'_{x_2}w'_{x_2})$. For this
equation we consider the Cauchy problem with localized initial
conditions
\begin{equation}\label{WEq2}
w|_{t=0}=v\mbig[7](\frac{\mathbf{x}-\mathbf{x}^\circ}{l}\mbig[7]),\quad
\frac{\pa w}{\pa t}\Big|_{t=0}=0,
\end{equation}
where smooth function $V(y)$ (that decreasing fast as $|y|\to
\infty$) characterizes the uplift of the ocean surface, the
parameter $l$ characterizes the source size, and
$\mathbf{x}^{\circ}$ is a point in whose neighborhood the initial
perturbation is localized.

To the best of our knowledge, the complete analytic solutions to
problems of this kind are either absent or are somewhat
artificial/non-realistic. On the other hand, the realistic
Hamiltonians and depth-functions may lead, when solvable, to
nontrivial representation problems with (elliptic) theta-functions,
especially considering the fact that the ultimate formulas may give
rise to the rather nonstandard inversion problems \cite{alber2}. It
is this situation---transcendental equations in
sect.~\ref{case}---that arises in the procedure of integrating the
system \eqref{Ham_syst}--\eqref{Ham}. In the theory of integrable
systems \cite{fritz, dickey, belokolos}, such equations correspond
to a non-linear evolution on Jacobians and to inversion of
meromorphic \cite{alber1, alber2} and logarithmic integrals
\emph{rather than} the holomorphic ones. See also
\cite[sect.~9.2]{br1} for explicit formulas and monograph
\cite{fritz} for extensive bibliography along these lines.

 It is also known that solvability in terms of elliptic
functions \cite{Ahiezer} constitutes presently a school in its own
right and impart the great analytic effectiveness to the general
integrability-theory \cite{belokolos, dickey}; even the `modular
part' of the elliptic theory---Weierstrass' parameters
$g_2,g_3$---meets (nonstandard) inversion problems and has
nontrivial applications \cite{br2}.

As for the wave applications, we consider waves propagating over
underwater banks and ridges. Accordingly, the function $D$ that
determines the shape of the basin bottom has one of two types. In
the case of a bank, the basin depth is defined by the function
\begin{equation}\label{D1}
D(\varrho,\varphi)=\frac{\varrho^2+b}{\varrho^2+a}\,
\end{equation}
where $\varrho$ is a polar radius and $a>b>0$ are constants. For a
ridge, the basin depth is defined by the function
\begin{equation}\label{D2}
D(\mathbf{x})=\frac{x_1^2+b}{x_1^2+a}, \quad \mathbf{x}=(x_1,x_2)
\end{equation}
with the same restriction on the constants $a$ and $b$. The exact
analytical solutions of the corresponding Hamiltonian systems are
given in sect.~\ref{case1} and sect.~\ref{case2} respectively.

The problem of constructing such solutions is interesting both in
its own right and in applications. In particular, the system
\eqref{Ham_syst}--\eqref{Ham} arises when constructing the
asymptotic solutions to the various boundary-value problems for the
wave equation with a variable velocity $C(\mathbf{x})$ by
application of the well-known in physics \emph{the ray method}
\cite[sect.~I]{BabicBul}. This technique is based on the eikonal
equation---the Hamilton--Jacobi equation for a Hamiltonian---and on
the transport equations through a transition to the ray coordinates.
 Let us briefly describe the idea of the method and explain
how the asymptotic solution of equation \eqref{WEq1} is related to
the Hamiltonian~\eqref{Ham}.

If the source-localization parameter $l$ tends to zero we may seek
an asymptotic solution of the problem \eqref{WEq1}--\eqref{WEq2} in
the form
\begin{equation*}
w=f\mbig[7](\frac{t-s(\mathbf{x})}{l},\mathbf{x},t\mbig[7])+O(l)\,,
\end{equation*}
where $f(y,\mathbf{x})$ and $s(\mathbf{x})$ are smooth functions,
and $f(y,\mathbf{x})$ is decreasing fast as $|y| \rightarrow
\infty$. Substituting this ansatz into \eqref{WEq1} and equating the
coefficient at $l^{-2}$ to zero, we get the Hamilton--Jacobi
equation for the function $s(\mathbf{x})$:
\begin{equation}\label{HJeq}
C(\mathbf{x})\sqrt{\left(\frac{\partial s(\mathbf{x})}{\partial
x_1}\right)^2+\left(\frac{\partial s(\mathbf{x})}{\partial
x_2}\right)^2}=\pm 1.
\end{equation}
Note that the left-hand side of this equation is obtained from
function \eqref{Ham} by replacing $p_i$ with $\frac{\partial
s}{\partial x_i}$ ($i=1,2$).  Equation \eqref{HJeq} can thus be
solved along characteristics $\mathbf{x}(\tau),\,\mathbf{p}(\tau)$
that are determined from the system \eqref{Ham_syst}--\eqref{Ham};
namely, $\mathbf{p}(\tau)=\frac{\partial s}{\partial
\mathbf{x}}(\mathbf{x}(\tau))$ and
$\frac{ds(\mathbf{x}(\tau))}{d\tau}=\langle\mathbf{p}(\tau),\frac{d
\mathbf{x}(\tau)}{d \tau}\rangle=\pm 1$.

Solutions of the system \eqref{Ham_syst}--\eqref{cond} determine
both \textit{rays}---projections of the characteristics onto plane
$(x_1,x_2)$ (the angle $\psi$ specifies the direction of the
ray)---and the \textit{wave fronts}. The wave front at time $t$ is a
curve $s(\mathbf{x})=t$. Parenthetically, the generalized phase
$S(\mathbf{x},t)$ is also determined by rays and $S(\mathbf{x},t)=0$
at front points. The choice of the initial conditions
$\mathbf{x}(t)|_{t=t_\circ}=\mathbf{x}^\circ$ for the system (we
release characteristics from the point $\mathbf{x}^\circ$) is
explained by the fact that the initial condition \eqref{WEq2} for
the wave equation \eqref{WEq1} is localized in a neighborhood of
$\mathbf{x}^\circ$. If rays do not cross and fronts are smooth
curves, then equation \eqref{HJeq} is solved by passing from
coordinates $(\tau,\psi)$ to $(x_1,x_2)$. The equating the
coefficients at $l^{-1}$ to zero yields a transport equation for the
function $f$, whose solution can also be obtained by the ray method.
However, in the vicinity of the singularity of the rays'
field---focal points of fronts---one should use other methods, in
particular, the method of the Maslov canonical operator. See the
monographs \cite[secs.~III.8--12]{MaslOp} and
\cite[secs.~I.6--8]{MaslFed} for details.

%The solutions of the system \eqref{Ham_syst}, \eqref{Ham},
%\eqref{cond} determine both the rays and wave fronts: curves in
%which neighborhood the wave is localized.
Let us give more precise definitions of the concepts mentioned
earlier. Let $\mathbf{X}(\psi,t)$ and $\mathbf{P}(\psi,t)$ be a
solution of the problem \eqref{Ham_syst}--\eqref{cond}. Then
\emph{the ray} is by definition an
$\mathbb{R}^2_{\mathbf{x}}$-subset
$\{(x_1,x_2)=(X_1(\psi_0,t),X_2(\psi_0,t)):\, t>0\}$ under the fixed
$\psi_0 \in [0,2\pi)$. At each moment of time $t$ the ends of
trajectories define smooth closed curves
$\Gamma_{t}=\{\mathbf{x}=\mathbf{X}(\psi,t),\,
\mathbf{p}=\mathbf{P}(\psi,t)\}$ in the four-dimensional phase space
$\mathbb{R}_{\mathbf{xp}}^4$.
\begin{definition}\upshape%
Curves $\Gamma_{t}$ are called \emph{the wave fronts in the phase
space}. Curves $\gamma_{t}=\{\mathbf{x}=\mathbf{X}(\psi,t), \psi \in
S^{1}\}$, which are projections of $\Gamma_{t}$ on
$\mathbb{R}_{\mathbf{x}}^2$, are called \emph{the wave fronts in the
configuration space}. The points of the fronts $\gamma_{t}$ wherein
$\pa_{\psi}\mathbf{X}=0$ will be termed the \emph{focal points}.
\end{definition}

According to \cite{DobSekTirVol, DobShafTir, DobrNaz}, at each
moment of time $t$ an asymptotic solution of the problem
\eqref{WEq1}--\eqref{WEq2} as $l \rightarrow 0$ is determined by
$\Gamma_{t}$ and has been localized in a neighborhood of
$\gamma_{t}$. Note that in contrast to $\Gamma_{t}$ the curves
$\gamma_t$ may be nonsmooth and have points of self-intersection.
Moreover, the asymptotic formulas differ in the neighborhood of
nonfocal and focal points of the front. Thus, from the point of view
of applications, a problem of visualizing the fronts is relevant.
The corresponding algorithm is described in sect.~\ref{app1}.
Formulas for the asymptotic solution of the problem
\eqref{WEq1}--\eqref{WEq2} as $l \rightarrow 0$ are given in
sect.~\ref{app2}.

 This methodology can be  further applied to the
constructing the asymptotic solutions of a  PDEs with a localized
right-hand side. To illustrate, we consider an inhomogeneous
equation
\begin{equation}\label{WEq3}
\frac{\pa^2 u}{\pa t^2}-\big\langle \nabla,
C^2(\mathbf{x})\nabla\big\rangle u=F(\mathbf{x},t),
\quad \mathbf{x}\in \mathbb{R}^2.
\end{equation}
Such equation describes the situation when a source acts over time,
while the Cauchy problem \eqref{WEq2}  for the homogeneous wave
equation \eqref{WEq1} describes the instantaneous action of the
source.  Suppose that the source is harmonic in time and localized
in space, i.~e., let
$F=\frac{1}{\varepsilon^3}e^{\frac{i}{\varepsilon}\omega t}\cdot
V(\frac{\mathbf{x}- \mathbf{x}^{\circ}}{\varepsilon})$, with
$\omega>0$. If we represent solution of equation \eqref{WEq3} in the
form $u=e^{\frac{i}{\varepsilon}\omega t}v$ we obtain the Helmholtz
equation
\begin{equation}\label{in_WEq4}
-\omega^2 v-\varepsilon^2\big\langle \nabla,
C^2(\mathbf{x})\nabla\big\rangle v=\frac{1}{\varepsilon} V
\mbig[7](\frac{\mathbf{x}-\mathbf{x}^{\circ}}{\varepsilon}\mbig[7]),
\quad \mathbf{x}\in \mathbb{R}^2.
\end{equation}
The work \cite{AnDobrNazRoul} describes an approach  to the
constructing the asymptotic solutions of inhomogeneous equations of
this type. Using the Maupertuis--Jacobi method, we transform the
original Hamiltonian
$\scr{H}(\mathbf{x},\mathbf{p})=-\omega^2+C^2(\mathbf{x})
|\mathbf{p}|^2$ for equation \eqref{in_WEq4}  (the differential
operator $\scr{H}(\mathbf{x},\hat{\mathbf{p}}),\,
\hat{\mathbf{p}}=-i\varepsilon \nabla$ defines the equation) to the
Hamiltonian \eqref{Ham}, which is considered in the present work.
The exact formulas for the fronts allow us to obtain quiet simple
expression for the asymptotic solution. An example pertaining to
this situation is discussed in sect.~\ref{app3}.

\section{Analytical solutions}\label{case}

\subsection{Underwater bank}\label{case1}

Consider first  a situation in which the bottom has the shape of
underwater bank. This means that the basin depth is defined by the
function \eqref{D1}.  Since it is symmetric, without loss of
generality we can assume that $\mathbf{x}^{\circ}=(-\xi,0)$, where
$\xi>0$. We also put the free-fall acceleration to be equal to
unity: $g=1$.

The polar symmetry guides us to pass to the polar coordinates
\begin{gather}
x_1=\varrho \cos\varphi,\quad x_2=\varrho \sin\varphi,\notag\\
p_1=u \cos\varphi-\frac{1}{\varrho}v\sin\varphi,\quad p_2=u
\sin\varphi+\frac{1}{\varrho}v\cos\varphi \label{P_E1},
\end{gather}
where $u:=p_{\varrho}$, $v:=p_{\varphi}$ are the momenta
corresponding to the variables $(\varrho,\varphi)$. In these
coordinates the Hamiltonian \eqref{Ham}  acquires the form
\begin{equation}\label{Ham_new}
\scr{H}(\varrho,\varphi;u,v)=
\sqrt{u^2+\frac{v^2}{\varrho^2}}\cdot
\sqrt{\frac{\varrho^2+b}{\varrho^2+a}}
\end{equation}
and initial conditions \eqref{cond} become
\begin{equation}\label{InD2}
u|_{t=t_\circ}=-\cos \psi,\quad v|_{t=t_\circ}=-\xi\sin
\psi,\quad \varrho|_{t=t_\circ}=\xi,\quad \varphi|_{t=t_\circ}=\pi.
\end{equation}
Since all the solutions to our models involve the elliptic integrals
and functions, we shall adopt in what follows a shortened
Weierstrass notation for them \cite{Ahiezer}:
$$
\wp(z)=\wp(z;\alpha,\beta),\quad\zeta(z)= \zeta(z;\alpha,\beta),
\quad\sigma(z)=\sigma(z;\alpha,\beta).
$$

\begin{proposition}\label{theorem1} The solution of the
Hamiltonian system \eqref{Ham_syst}--\eqref{Ham} and \eqref{Ham_new}
under the initial conditions \eqref{InD2} is given by the following
expressions
\begin{equation}\label{SOL}
\left\{
\begin{aligned}
\varrho&=\sqrt{\wp(\bop)+\delta-a},\\
 \varphi&=\pi \pm \bigg\{1+2\,b\,
\frac{\zeta(\varkappa)}{\wp'(\varkappa)}\bigg\}h\cdot(\bop-\po) \pm
\frac{b\,h}{\wp'(\varkappa)}
\ln\frac{\sigma(\bop-\varkappa)\,\sigma(\po+\varkappa)}
{\sigma(\bop+\varkappa)\,\sigma(\po-\varkappa)}\\
u&=-\frac12
\sqrt{\frac{\xi^2+b}{\xi^2+a}}\,
\frac{\wp'(\bop)}{\wp(\bop)+\delta-a+b}\,
\frac{1}{\sqrt{\wp(\bop)+\delta-a}},\quad
v=-\xi\sin\psi
\end{aligned}
\right.,
\end{equation}
The functions $\bop$, $\po$, and $\varkappa$ are solutions of the
transcendental equations
\begin{equation}\label{p0pkappa}
\wp(\po)= \xi^2-\delta+a\,, \quad \wp(\varkappa)=\frac13(a+b-\frak h)\,,\quad
t-t_\circ=\delta\cdot (\bop-\po)-\zeta(\bop)+\zeta(\po)\,.
\end{equation}
The expressions for  $\delta$, $\alpha$ and $\beta$ in terms of
parameters of the problem are as follows
\begin{equation}\label{alpha}
\left\{
\begin{aligned}
\delta&=\frac13\,({\frak h}+2\,a-b), \quad
\alpha=\frac43\,\big({\frak h}^2-
2\,(a-2\,b)\,{\frak h}+a^2-a\,b+b^2\big)\\[1ex]
\beta&=\frac{4}{27}\,\big({\frak h}-a+2\,b\big)
\big(2\,{\frak h}^2-4\,(a-2\,b){\frak h}+(2\,a-b)(a+b)\big)
\end{aligned}
\right.,
\end{equation}
where
\begin{equation}\label{h}
h^2\FED{\frak h}=\xi^2\,\frac{\xi^2+a}{\xi^2+b}\sin^2\psi\,.
\end{equation}
\end{proposition}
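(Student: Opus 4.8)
The plan is to reduce the system to quadratures by exploiting its two first integrals. Since the Hamiltonian \eqref{Ham_new} is independent of $\varphi$, the conjugate momentum $v=p_\varphi$ is conserved, so by \eqref{InD2} $v\equiv-\xi\sin\psi$; and the energy $\scr H\equiv E$ is conserved, with $E=\sqrt{(\xi^2+b)/(\xi^2+a)}$ read off the initial data. The single combination that controls everything is $v^2/E^2=\xi^2\frac{\xi^2+a}{\xi^2+b}\sin^2\psi=\frak h$, precisely the quantity \eqref{h}. First I would solve the energy relation $E^2=(u^2+v^2/\varrho^2)\frac{\varrho^2+b}{\varrho^2+a}$ for $u^2$ and substitute into $\dot\varrho=\scr H_u=\frac uE\frac{\varrho^2+b}{\varrho^2+a}$. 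Writing $R:=\varrho^2$ and squaring, this collapses to the separated equation
\begin{equation*}
\dot R^2=\frac{4(R+b)}{(R+a)^2}\big(R^2+(a-\frak h)R-\frak h\,b\big).
\end{equation*}

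The decisive device is to trade the physical time $t$ for a new independent variable $\bop$ through $dt=(\varrho^2+a)\,d\bop$, which exactly clears the denominator: setting $w:=R+a-\delta$ one finds $(dw/d\bop)^2$ equal to a \emph{cubic} in $w$. The shift constant $\delta$ in \eqref{alpha} is chosen precisely so that the quadratic term of that cubic cancels, leaving the Weierstrass normal form $(w')^2=4w^3-\alpha w-\beta$; a direct expansion then identifies $\alpha,\beta$ as in \eqref{alpha} and forces $w=\wp(\bop)$, i.e.\ $\varrho=\sqrt{\wp(\bop)+\delta-a}$. Integrating $dt=(\wp(\bop)+\delta)\,d\bop$ with $\int\!\wp=-\zeta$ yields the third relation in \eqref{p0pkappa}, while the initial value $R=\xi^2$ fixes $\wp(\po)=\xi^2-\delta+a$. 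The radial momentum is then recovered from $u=E\dot\varrho\,\frac{\varrho^2+a}{\varrho^2+b}$ by differentiating $\varrho(\bop)$ and using $dt/d\bop=\wp(\bop)+\delta$, which produces the stated expression for $u$ up to the branch sign of $\wp'(\bop)$.

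For the angular variable I would pass to the same uniformizer: $\frac{d\varphi}{d\bop}=\dot\varphi\,\frac{dt}{d\bop}=\frac vE\big(1+\frac bR\big)$, where $\dot\varphi=\scr H_v=\frac{v}{E\varrho^2}\frac{\varrho^2+b}{\varrho^2+a}$. Here one observes $R=\wp(\bop)-\wp(\varkappa)$ with $\wp(\varkappa)=a-\delta=\frac13(a+b-\frak h)$, which is exactly the second equation of \eqref{p0pkappa} ($\varkappa$ marking the value of $\bop$ at which $\varrho=0$). The only transcendental step is the integral $\int\frac{d\bop}{\wp(\bop)-\wp(\varkappa)}$, which I would evaluate by the classical identity $\frac{\wp'(\varkappa)}{\wp(\bop)-\wp(\varkappa)}=\zeta(\bop-\varkappa)-\zeta(\bop+\varkappa)+2\zeta(\varkappa)$ and then integrate via $\zeta=(\ln\sigma)'$. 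This delivers simultaneously the term linear in $(\bop-\po)$ with coefficient $\{1+2b\,\zeta(\varkappa)/\wp'(\varkappa)\}h$ and the symmetric $\sigma$-quotient logarithm of \eqref{SOL}, with $v/E=\pm h$ absorbing the ambiguous sign.

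I expect the main obstacle to be purely computational: checking that the single shift by $\delta$ annihilates the $w^2$-coefficient \emph{and} reproduces the rather intricate invariants \eqref{alpha}, together with the matching $\wp(\varkappa)=\frac13(a+b-\frak h)$. The elliptic-integral step for $\varphi$ is conceptually routine once the $\zeta$-addition theorem is invoked, but bookkeeping the $\pm$ branches---arising from the sign of $\dot\varrho$ and of $\sin\psi$---requires care in order to land exactly on the combination of $\sigma$-factors displayed in \eqref{SOL}.
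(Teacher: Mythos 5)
Your proposal is correct and follows essentially the same route as the paper: conservation of $v=p_\varphi$ and of $\scr H$, separation of the radial dynamics into a cubic (meromorphic) elliptic integral, a shift by $\delta$ to reach the Weierstrass normal form with $\varrho^2=\wp(\bop)+\delta-a$ and $dt=(\wp(\bop)+\delta)\,d\bop$, and evaluation of the logarithmic integral $\int d\bop/(\wp(\bop)-\wp(\varkappa))$ for $\varphi$. The only cosmetic differences are that you work with $R=\varrho^2$ rather than $z=\varrho^2+a$ and invoke the $\zeta$-addition identity directly instead of the paper's quoted antiderivative, both of which land on the same $\sigma$-quotient.
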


\begin{remark}
\upshape All the quantities $\po$, $\bop$, and $\varkappa$ may be
complex. They lie on the edges of the parallelogram
$(0,\omega,\omega+\omega',\omega'),$ where $\omega$ is the pure real
period of the Weierstrass  function and $\omega'$ is pure imaginary.
%In particular, $\po$ can be represented in one of the following four
%types: $p_0, ip_0,p_0+\omega', ip_0+\omega$, where $p_0$ is real.
Besides, $\bop$ lies on the same edge of the parallelogram as $\po$.
It should be noted that equations \eqref{p0pkappa} have infinitely
many roots. Because of this, we assume that the roots from the first
positive branch are taken. An algorithm which illustrates the
formulas and this remark is discussed in sect.~\ref{app1}.
\end{remark}
\begin{proof}
 Since transformation \eqref{P_E1} is canonical, the polar
representation of the system under consideration has the same
gradient form as \eqref{Ham_syst}. Solution of
\eqref{Ham_syst}--\eqref{cond} thus boils down to integrating the
equations
\begin{alignat}{3}
\dot u&=-\scr H_\varrho\,,&\qquad&\dot\varrho=\frac{u\,\varrho}
{\sqrt{u^2\varrho^2+v^2}}\,
\sqrt{\frac{\varrho^2+b}{\varrho^2+a}}\qquad(=\scr H_u), \notag \\
\label{Ham_sys_new} \dot v&=0\,,&\qquad&\dot
\varphi=\frac{(v/\varrho)}{\sqrt{u^2\varrho^2+v^2}}\,
\sqrt{\frac{\varrho^2+b}{\varrho^2+a}}\qquad(=\scr H_v)
\end{alignat}
with the Hamiltonian function \eqref{Ham_new} and initial conditions
\eqref{InD2}. We however do not use the canonicity of this
transformation, because the mere form \eqref{Ham_sys_new} suggests
the scheme of obtaining the separable dynamical equations. For the
same reason, nor do we resort to the standard separability theory
(see, e.~g. \cite{blaszak}),

Indeed, the following laws of conservation are obvious:
\begin{equation}\label{ConsL1}
|p|c(\varrho)=c_\circ,\quad
c_\circ=c(|\xi|)\equiv\sqrt{\frac{b+\xi^2}{a+\xi^2}},\quad
v=-\xi \sin \psi\,,
\end{equation}
and we represent them through the free complex constants $\gamma$
and $h$:
\begin{equation}\label{int}
v=\gamma\quad(=\const)\,,\qquad \scr{H}^2(\varrho,\varphi; u,
\gamma)=\mbig[7](u^2+\frac{\gamma^2}{\varrho^2}\mbig[7])
\frac{\varrho^2+b}{\varrho^2+a}=\frac{\gamma^2}{h^2} \quad( = \const)\,.
\end{equation}
These constants are related with the initial conditions, and it is
not difficult to see that these relationships are  equivalent to the
following ones:
$$%\label{ConsL2}
\gamma=-\xi \sin \psi,\quad h^2=\xi^2 \frac{a+\xi^2}{b+\xi^2}\sin^2\psi.
$$%…

Let us ascertain the dynamics $\varrho=R(t)$. By rewriting the
integral \eqref{int} in the form
$\scr H(\varrho,\varphi;u,v)=\gamma/h$, we obtain identity
$\sqrt{\frac{\varrho^2
+b^2}{\varrho^2+a^2}}=\frac{\gamma}{h\sqrt{u^2+\gamma^2/\varrho}}$,
whence one gets
\begin{equation}\label{xu}
\dot\varrho=\frac{(\gamma/h)\cdot u}{u^2+\gamma^2/\varrho^2}.
\end{equation}
Expressing $u$ via variable $\varrho$ and using the integral
$\scr H(\varrho,\varphi;u,v)=\gamma/h$, we obtain the autonomous
dynamics for the function $\varrho$:
\begin{equation*}
\varrho\,\dot\varrho=
\frac{\sqrt{(\varrho^2+b)\mbig[1][\varrho^2(\varrho^2+a)-
(\varrho^2+b)h^2\mbig[1]]}}{(\varrho^2+a)}\,.
\end{equation*}
The change $\varrho^2+a\FED z$ suggests itself, and we derive
\begin{equation*}
\frac12\,\dot z=\frac{1}{z}\,\sqrt{(z-a+b)\mbig[1][z\,(z-a)-(
z-a+b)h^2\mbig[1]]}\,.
\end{equation*}
This dynamics is readily transformed into the integral form
\begin{equation}\label{base}
\int\limits^z_{\infty}{z}\,\frac{dz}{w}=t\,,\qquad
\begin{aligned}
w^2\DEF{}& 4\,\mbig[1][z\,(z-a)-(z-a+b)h^2\mbig[1]](z-a+b)=\\
={}&4\,(z-A)(z-B)(z-C)
\end{aligned}\,,
\end{equation}
where constants $(A,B,C)$ can be expressed through the parameters of
the problem $(a,b,\xi)$ and constants $(\gamma,h)$. As was mentioned
in Introduction, we arrive at the problem of inversion of a
meromorphic elliptic integral.

Let us reduce \eqref{base} to the canonical Weierstrass form. For
this we make a shift
\begin{equation*}
\begin{aligned}
z\goto s&=z-\delta\,,\quad \delta\DEF\frac13\,(A+B+C)
\end{aligned}
\end{equation*}
in order to obtain the standard:
$4\,s^3-\alpha\,s-\beta=4\,(z-A)\,(z-B)\,(z-C)$. Hence it follows that
the new constants $(\alpha,\beta)$ are given by the expressions
\eqref{alpha}. Then equation \eqref{base} takes the form
\begin{equation*}
\int\limits^{\varrho^2+a-\delta}_\infty
\![8]\frac{(s+\delta)\,ds}{\sqrt{4\,s^3-\alpha\,s-\beta}}=t
\qquad\goto\qquad \varrho=R(t)\,
\end{equation*}
and its solution can be represented  in terms of $\wp$-function as
\begin{equation}\label{R(t)}
\varrho=R(t)=\sqrt{\wp\big({\frak p}(t);\alpha,\beta\big)+\delta-a}.
\end{equation}
Here, the function ${\frak p}(t)$ is determined---in the full
elliptic notation---from equation
\begin{equation}\label{trans}
\delta\cdot \bop-\zeta(\bop;\alpha,\beta)=t,
\end{equation}
and $\zeta(\bop;\alpha,\beta)$ is the associated Weierstrass
zeta-function. It is related to the $\wp$-function by definition
$\zeta'(\bop;\alpha,\beta)=-\wp(\bop;\alpha,\beta)$. Let us adopt
the solution $\varrho=R(t)$, \ie, formulas
\eqref{R(t)}--\eqref{trans} for the case of  an arbitrary initial
condition $R(t_\circ)=\varrho_\circ$. It is more  convenient to
represent it in a $\bop$-parametric form:
\begin{gather}\label{xt}
\notag
\begin{aligned}
\varrho=\sqrt{\wp(\bop)+\delta-a}\,,\quad
t-t_\circ=\delta\cdot (\bop-\po)-\zeta(\bop)+\zeta(\po),
\end{aligned}\\[1ex]\notag
\po\,:\quad \sqrt{\wp(\po)+\delta-a}=\varrho_\circ\qquad\text{(initial
condition)}\,.
\end{gather}
The solution $u=U(t)$ can be derived from \eqref{int}--\eqref{xu}
and from the relation
$$
\dot{\bop}(t)=\frac{1}{\wp(\bop(t))+\delta}.
$$
Substituting the expression for $\varrho=R(t)$ we get the 3-rd
formula in \eqref{SOL}.
%\begin{equation*}\label{ut}
%u=\frac {\gamma}{2\,h}\,
%\frac{\wp'\mbig[1](\frak p(t)\mbig[1])}
%{\wp\mbig[1](\frak p(t)\mbig[1])+\delta-a+b}
%\sqrt[-2]{\wp\mbig[1](\frak p(t)\mbig[1])+\delta-a}\,.
%\end{equation*}

Consider now the solution $\varphi = \Phi(t)$. With use of the last
equation of system \eqref{Ham_sys_new} and expression \eqref{R(t)},
we obtain
\begin{equation*}
\dot \varphi=\frac{h}{\varrho^2}\,\frac{\varrho^2+b}{\varrho^2+a}=
h\mbig[7](1+ \frac{b}{\wp(\bop)+\delta-a}
\mbig[7])\frac{1}{\wp(\bop)+\delta}\,.
\end{equation*}
Note that we have used the integral \eqref{int} as a radical.
Therefore the sign should be taken into consideration, and we
understand $\pm h$ by the notation $h$. Thus,
\begin{equation*}
\varphi=\varphi_\circ^{}+ h\,(\bop-\po)+b\,h\cdot
\int\limits^t_{t_\circ}\frac{1}{\wp\mbig[1](\frak p(t)\mbig[1])+
\delta-a}\cdot\frac{d t}{\wp\mbig[1](\frak p(t)\mbig[1])+ \delta}.
\end{equation*}
Making the change of variable by equality
$dt=(\wp(\bop)+\delta)\,d\bop$, one obtains
\begin{equation*}
\varphi=\varphi_\circ^{}+h\,(\bop-\po)+b\,h\cdot \int\limits^{\frak
p(t)}_{\po}\frac{d\bop}{\wp(\bop)- \lambda},\qquad \lambda\DEF
a-\delta\,.
\end{equation*}
This is nothing but the \emph{logarithmic} elliptic integral, which
can be easily evaluated \cite{Ahiezer}. We then arrive at formula
\begin{equation*}
\varphi=\varphi_\circ^{}+h\,(\bop-\po)+
\left.\frac{b\,h}{\wp'(\varkappa)}
\bigg\{\ln\frac{\sigma^2(\bop-\varkappa)}{\sigma^2(\bop)}-
\ln\big(\wp(\bop)-\lambda\big)+2\,\zeta(\varkappa)\,\bop\bigg\}
\right|_{\po}^{\bop}\,,
\end{equation*}
where $\varkappa$ is defined by the equality
$\lambda=\wp(\varkappa)$. This expression can be rewritten in terms
of Weierstrass $\sigma$-functions alone:
\begin{equation}\label{angle}
\notag
\begin{aligned}
\varphi={}&\varphi_\circ^{} \pm \bigg\{1+2\,b\,
\frac{\zeta(\varkappa)}{\wp'(\varkappa)}\bigg\}h\cdot(\bop-\po) \pm
\frac{b\,h}{\wp'(\varkappa)}
\ln\frac{\sigma(\bop-\varkappa)\,\sigma(\po+\varkappa)}
{\sigma(\bop+\varkappa)\,\sigma(\po-\varkappa)}\,,\\[1ex]
\varkappa\DEF{}&\wp^{-1}\big(\tfrac13(a+b-\frak h)\big)\,.
\end{aligned}
\end{equation}
Invoking the initial condition \eqref{InD2}, we get finally the
solution \eqref{SOL}.
\end{proof}

\subsection{Underwater ridge}\label{case2}

Consider now the underwater ridge. This means that the basin depth
is defined be the function \eqref{D2}. As earlier we set the
free-fall acceleration to unity. Then the Hamiltonian \eqref{Ham}
has the form
\begin{gather}\label{Ham2}
\scr H(x_1,x_2;p_1,p_2)=\sqrt{p_1^2+p_2^2}\cdot
\sqrt{\frac{x_1^2+b}{x_2^2+a}}\,,\\
(p_1,p_2)\DEF (p_{x_1},p_{x_2}).\notag
\end{gather}
As in the previous section, by symmetry and without loss of
generality, we may set $\mathbf{x}^{\circ}=(-\xi,0),$ $\xi>0$. The
initial conditions \eqref{cond} are thus as follows
\begin{equation}\label{incond_2}
x_1|^{}_{t=t_\circ}=-\xi,\qquad x_2^{}|_{t=t_\circ}=0,\qquad
p_1|^{}_{t=t_\circ}=\cos \psi, \qquad
p_2|^{}_{t=t_\circ}=\sin\psi.
\end{equation}

\begin{proposition}
The solution of the Hamiltonian system \eqref{Ham_syst}--\eqref{Ham}
and \eqref{Ham2} under the initial conditions \eqref{incond_2} is
given by the following expressions
\begin{equation}\label{res_case2}
\left\{
\begin{aligned}
x_1&=-\sqrt{\wp(\bop;\alpha,\beta)+\delta-a},\quad
x_2=\sqrt{1-\frak h^{-1}}\cdot t+ \sqrt{\frak
h-1}\,(b-a)\cdot(\bop-\po)\\
p_1&=\frac{1}{2}\,\frac{\gamma}{\sqrt{\frak h-1}}
\frac{\wp'\big({\frak
p}(t)\big)} {\wp\big({\frak p}(t)\big)+\delta-a+b}
\sqrt[-2]{\wp\big({\frak p}(t)\big)+\delta-a},\quad
p_2=\gamma
\end{aligned}
\right.\,.
\end{equation}
The functions $\bop$ and $\po$ are solutions of transcendental
equations
\begin{equation}\label{new2'}
\frac{1}{h}\,(t-t_{\circ})=\delta\cdot
(\bop-\po)-\zeta(\bop;\alpha,\beta)+ \zeta(\po;\alpha,\beta)\,,\quad
\po=\wp^{-1}(\xi^2-\delta+a;\alpha,\beta)\,.
\end{equation}
The expressions for $\delta$, $\alpha$ and $\beta$ in terms of
parameters of the problem are as follows
\begin{equation}\label{alpha'}
\left\{
\begin{aligned}
\delta&=\frac13\,\big((b-a)\, \frak h+3\,a-2\,b\big), \quad
\alpha=\frac{4}{3}\,\big((b-a)^2\frak h^2-
(b-a)\,b\,\frak h+b^2\big)\\[1ex]
\beta&=\frac {4}{27}\,\big((b-a)\,\frak h+b\big)
\big((b-a)\,\frak h-2\,b\big)\big(2\,(b-a)\,\frak h-b\big)
\end{aligned}\right.\,,
\end{equation}
where
\begin{equation*}
\frak{h}=\frac{\xi^2+b}{(\xi^2+b)-\sin^2\psi(\xi^2+a)}, \quad
\gamma=\sin \psi.
\end{equation*}
\end{proposition}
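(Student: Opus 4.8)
The plan is to follow the route of Proposition~\ref{theorem1}, now exploiting that the depth \eqref{D2} depends on $x_1$ alone, so that $x_2$ is a cyclic coordinate. First I would record the two integrals of motion: since $\scr H$ in \eqref{Ham2} is independent of $x_2$, the momentum $p_2$ is conserved and \eqref{incond_2} fixes $p_2\equiv\gamma=\sin\psi$; autonomy of $\scr H$ gives the energy integral $\scr H\equiv c_\circ=\sqrt{(\xi^2+b)/(\xi^2+a)}$. Writing the energy relation as $\scr H^2=(p_1^2+\gamma^2)(x_1^2+b)/(x_1^2+a)=c_\circ^2$ lets me solve for $p_1^2$ as a function of $x_1$, after which $\dot x_1=\scr H_{p_1}=c_\circ p_1/(p_1^2+\gamma^2)$ closes into the single autonomous ODE
\begin{equation*}
\dot x_1=\pm\frac{\sqrt{(x_1^2+b)\big(c_\circ^2(x_1^2+a)-\gamma^2(x_1^2+b)\big)}}{c_\circ\,(x_1^2+a)}\,.
\end{equation*}

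Next I would pass to $z=x_1^2+a$, as in \eqref{base}. The chain rule $\dot z=2x_1\dot x_1$ supplies the factor $x_1=-\sqrt{z-a}$ (negative branch, fixed by $x_1|_{t=t_\circ}=-\xi$), and this is exactly what converts the quartic radicand in $x_1$ into a cubic in $z$:
\begin{equation*}
\frac{c_\circ}{2}\,\frac{z\,dz}{\sqrt{(z-a)(z-a+b)\big(c_\circ^2 z-\gamma^2(z-a+b)\big)}}=\mp\,dt\,,
\end{equation*}
a meromorphic (second-kind) elliptic integral. The cubic has roots $z\in\{a,\,a-b,\,(\frak h-1)(b-a)\}$ with $\frak h=c_\circ^2/(c_\circ^2-\gamma^2)$, which reproduces the stated $\frak h=(\xi^2+b)/((\xi^2+b)-\sin^2\psi(\xi^2+a))$. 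Performing the shift $s=z-\delta$, $\delta=\frac13\sum z_i$, to reach the canonical $4s^3-\alpha s-\beta$ and expanding the symmetric functions of the roots yields $\delta,\alpha,\beta$ as in \eqref{alpha'} (the three factors of $\beta$ are, up to $\tfrac{4}{27}$, the shifted roots $3(z_i-\delta)$). The leftover constant $c_\circ^2-\gamma^2$ combines with the prefactor to produce $h=\sqrt{\frak h}=c_\circ/\sqrt{c_\circ^2-\gamma^2}$, the source of the $1/h$ normalising time in \eqref{new2'}.

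Inverting the integral as in \eqref{R(t)}--\eqref{trans} gives $z=\wp(\bop)+\delta$, hence $x_1=-\sqrt{\wp(\bop)+\delta-a}$ together with the time law of \eqref{new2'}, with $\po$ fixed by $z|_{t=t_\circ}=\xi^2+a$. Differentiating $x_1$ and using $\dot{\bop}=1/(h(\wp(\bop)+\delta))$ gives the closed form for $p_1$, its coefficient collapsing to $\tfrac12\gamma/\sqrt{\frak h-1}$ via $\gamma=c_\circ\sqrt{(\frak h-1)/\frak h}$. For the front coordinate I would integrate $\dot x_2=\scr H_{p_2}=c_\circ\gamma/(p_1^2+\gamma^2)$, which in the $\bop$-variable reads $dx_2=\tfrac{\gamma h}{c_\circ}(\wp(\bop)+\delta-a+b)\,d\bop$. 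Here is the one real departure from the bank case: because $x_2$ is cyclic the integrand is a polynomial in $\wp$, not the rational $1/(\wp-\lambda)$ that forced the third-kind (logarithmic, $\sigma$-function) integral for $\varphi$; it integrates elementarily to $-\zeta(\bop)+\zeta(\po)$ plus a linear term. Substituting the time law then linearises the $\zeta$-part into $\tfrac1h(t-t_\circ)$, and $\gamma/c_\circ=\sqrt{1-\frak h^{-1}}$, $\gamma h/c_\circ=\sqrt{\frak h-1}$ reduce it to $x_2=\sqrt{1-\frak h^{-1}}(t-t_\circ)+\sqrt{\frak h-1}(b-a)(\bop-\po)$, that is \eqref{res_case2}.

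I expect the only obstacle to be bookkeeping: checking that the cubic's roots, the shifted invariants, and the chain of constants $c_\circ,\gamma,h,\frak h$ collapse to the compact coefficients in \eqref{res_case2}, and fixing every $\pm$ and branch choice consistently against \eqref{incond_2}. No analytic difficulty beyond Proposition~\ref{theorem1} arises; the cyclic structure in fact makes the front formula strictly simpler, since no logarithmic elliptic integral is required.
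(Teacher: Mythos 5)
Your proposal is correct and follows essentially the same route as the paper: the two conservation laws, reduction to an autonomous ODE for $x_1$, the substitution $z=x_1^2+a$ leading to a meromorphic elliptic integral, the Weierstrass shift giving \eqref{alpha'}, inversion via $\wp$ and the time law \eqref{new2'}, and the observation that the cyclic coordinate $x_2$ requires only an elementary (no third-kind) integral. The only cosmetic difference is that you integrate $dx_2$ in the $\bop$-variable and cancel the resulting $\zeta$-term against the time law, whereas the paper splits off the constant part of $\dot x_2$ first and converts $\int dt/(\wp+\delta)$ directly to $h(\bop-\po)$; both yield \eqref{res_case2}.
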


\begin{proof}
The Hamiltonian system with the Hamiltonian \eqref{Ham2} has the
form
\begin{alignat}{3}
\dot{p}_1&=-\scr H_{x_1}\,,&\qquad&\dot{x}_1=
\frac{p_1}{\sqrt{p_1^2+p_2^2}}\,
\sqrt{\frac{x_1^2+b}{x_1^2+a}}\qquad(=\scr H_{p_1}),\notag \\
\label{Ham_syst_case2}
\dot{p}_2&=0\,,&\qquad&\dot{x}_2=\frac{p_2}{\sqrt{p_1^2+{p_2}^2}}\,
\sqrt{\frac{x_1^2+b}{x_1^2+a}}\qquad(=\scr H_{p_2}).
\end{alignat}
Using laws of conservation, we obtain
\begin{equation}\label{int'}
p_2=\gamma\qquad(=\const)\,,\qquad
(p_1^2+\gamma^2)\,\frac{x_1^2+b}{x_1^2+a}=\frac{\gamma^2\,h^2}{h^2-1}
\qquad( = \const)\,.
\end{equation}
Here, $\gamma$ and $h$ are the complex constants related to initial
conditions of the problem through the formulas
\begin{equation}\label{InD5}
\gamma=\sin \psi, \qquad
\frac{\xi^2+b}{\xi^2+a}=\frac{\gamma^2\,\frak h}{\frak h-1}\qquad
(\frak h\DEF h^2)\,,
\end{equation}
and $(a,b,\xi)$ are parameters of the problem.

Let us ascertain the dynamics $x_1=X_1(t)$. Using the integral
$\scr H(x_1,x_2;p_1,p_2)=\frac{\gamma h}{\sqrt{\frak h -1}}$, we
obtain
\begin{equation}\label{tmp}
\dot{x}_1=\frac{p_1}{p_1^2+\gamma^2}\cdot
\frac{\gamma\,h}{\sqrt{\frak h-1}}\,.
\end{equation}
Acting in the same way as in the previous case, we derive the
dynamics for the variable $x_1$
$$
x_1\,\dot{x}_1=\frac{\sqrt{x_1^2\big(x_1^2+b\big)
\big(x_1^2-(\frak{h}-1)(b-a)\big)}}{h(x_1^2+a)}.
$$
The change $z\DEF x_1^2+a$ gives rise again to a meromorphic
integral:
\begin{equation*}\label{base'}
\int\limits_{t_\circ}^z z\,\frac{dz}{w}=\frac1h\cdot (t-t_\circ)\,,\qquad
\begin{aligned}
w^2\DEF{}& 4\,(z-a)(z-a+b)\big(z-(\frak h-1)\,(b-a)\big)=\\
={}&4\,(z-a)(z-a+b)(z-C).
\end{aligned}
\end{equation*}
As previously, we reduce this expression to the canonical form by a
shift $z\goto s=z-\delta$ in order to obtain the Weierstrassian
standard: $4\,s^3-\alpha\,s-\beta=4\,(z-a)\,(z-a+b)\,(z-C)$. Hence, the
constant $\delta$ and invariants of elliptic functions $\alpha$ and
$\beta$ are calculated through $a, b, \xi,\psi$; these are
expressions \eqref{alpha'}.

The solution has the same structure as in the previous case:
\begin{gather}\label{xt'}
\notag
\begin{aligned}
x_1=\pm\sqrt{\wp\big(\frak p(t);\alpha,\beta\big)-(a-\delta)},\\[1ex]
\delta\cdot
(\bop-\po)-\zeta(\bop;\alpha,\beta)+\zeta(\po;\alpha,\beta)&=
\frac1h\cdot (t-t_\circ),
\end{aligned}\\[1ex]\notag
\po\,:\quad \sqrt{\wp(\po)+\delta-a}=\xi \qquad\text{(initial
condition)}\,.
\end{gather}
Using \eqref{int'}, \eqref{tmp} and the relation
$$
\dot {\frak p}(t)=\frac1h\,\frac{1}
{\wp\mbig[1]({\frak p}(t)\mbig[1])+\delta}\,,
$$
we do substitute $x_1=X_1(t)$ found above and obtain solution
$p_1=P_1(t)$ displayed in \eqref{res_case2}.

Let us determine the dynamics $x_2=X_2(t)$. From the last equation
of the system \eqref{Ham_syst_case2} one obtains
\begin{equation*}
\dot{x}_2=\sqrt{1-\frak h^{-1}}\,\frac{x_1^2+b}{x_1^2+a}=
\sqrt{1-\frak h^{-1}}\mbig[7](1+\frac{b-a}{\wp(\bop)+\delta}
\mbig[7])\,.
\end{equation*}
Hence,
$$
x_2=x_2^{\circ}+\sqrt{1-\frak h^{-1}}\cdot(t-t_\circ)+\sqrt{1-\frak
h^{-1}}\,(b-a)\cdot
\int\limits^t_{t_\circ}\frac{dt}{\wp\mbig[1]({\frak p}(t)\mbig[1])+
\delta}\,.
$$
Rewriting the last integral in terms of elliptic functions, we
arrive at \eqref{res_case2}. No use is required of the
$\sigma$-functions in this case.
\end{proof}

\section{Applications to asymptotic theory}\label{app}

In this section we discuss some applications of obtained analytical
solutions to the asymptotic theory. We rely on the results by
Dobrokhotov, Nazaikinskii, Shafarevich, and by others
\cite{AnDobrNazRoul,DobrNaz,DobrNazTir,DobSekTirVol,DobShafTir} in
their study of the linear shallow-water waves, which are generated
by a localized source and propagate in the basin with a variable
depth. The method used is based on a modification of the Maslov
canonical operator \cite[secs.~I.6--8]{MaslFed}. One of the
advantages of this method is that it gives the convenient asymptotic
formulas in the neighborhood of focal points and caustics. Thus, the
analytical formulas for solutions of the Hamiltonian systems under
consideration allow us, among other things, to seek focal points. We
also note that the turning points at the wave fronts are focal.
Thereby visualization of the fronts can be considered as one of the
applications.

\subsection{Visualization of fronts. The algorithm}\label{app1}

\begin{figure}[h!]
\begin{minipage}[h]{0.39\linewidth}
\includegraphics[width=1\linewidth,height=4cm]{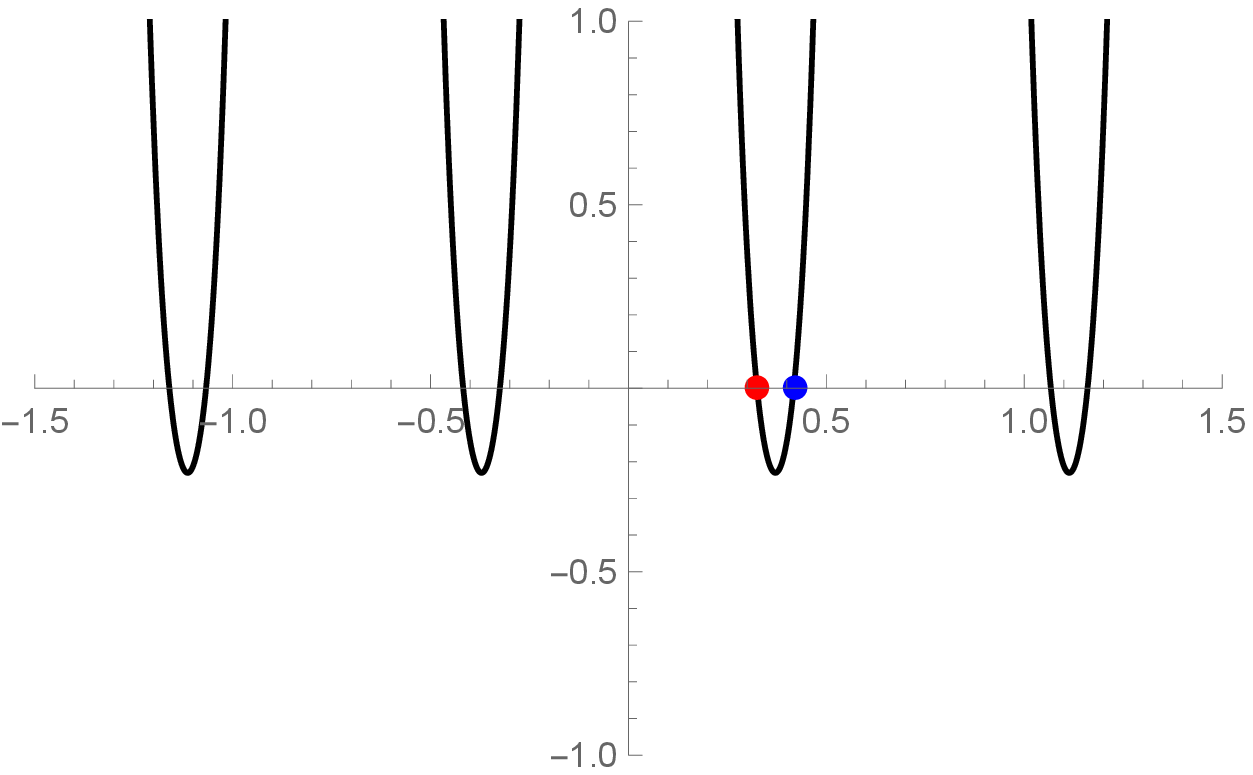}
\caption{Graph $y=\wp(x)-\xi^2+\delta-a$ under
$\psi=\frac{9\pi}{10}$} \label{figp0}
\end{minipage}
\hfill
\begin{minipage}[h]{0.59\linewidth}
\begin{minipage}[h]{0.64\linewidth}
\center{\includegraphics[width=0.55\linewidth]{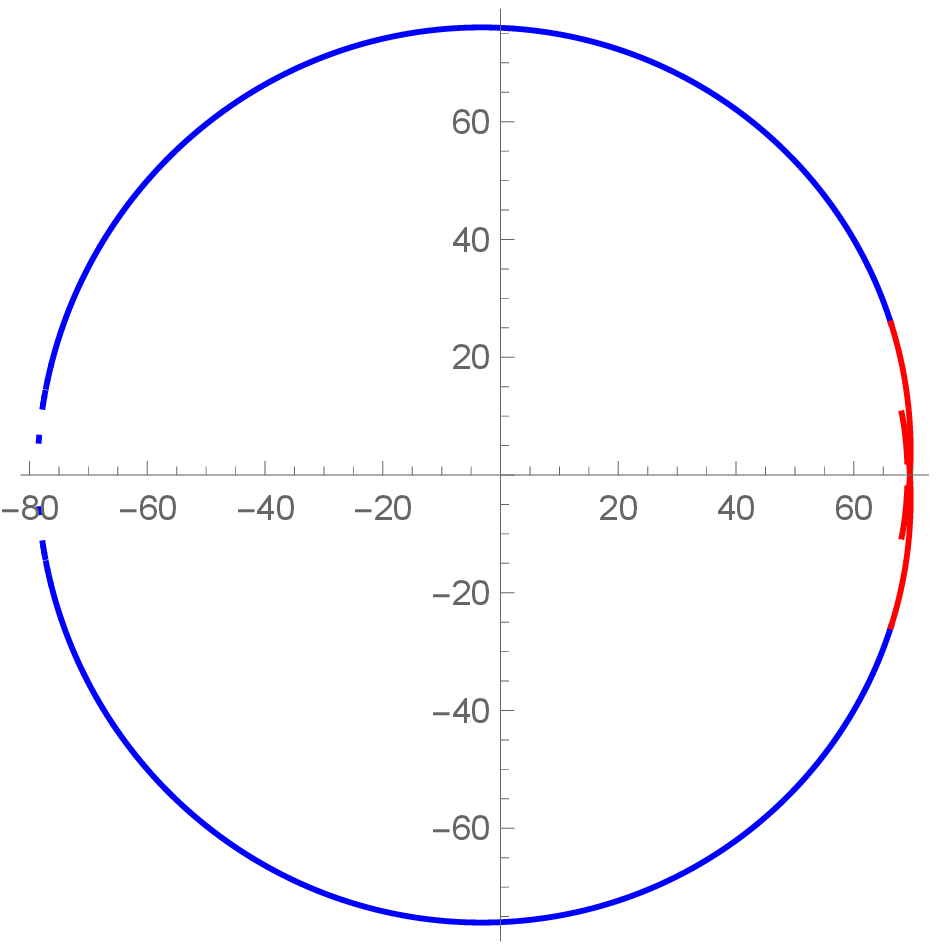}}
\end{minipage}
\hfill
\begin{minipage}[h]{0.34\linewidth}
\center{\includegraphics[width=0.9\linewidth,height=4cm]{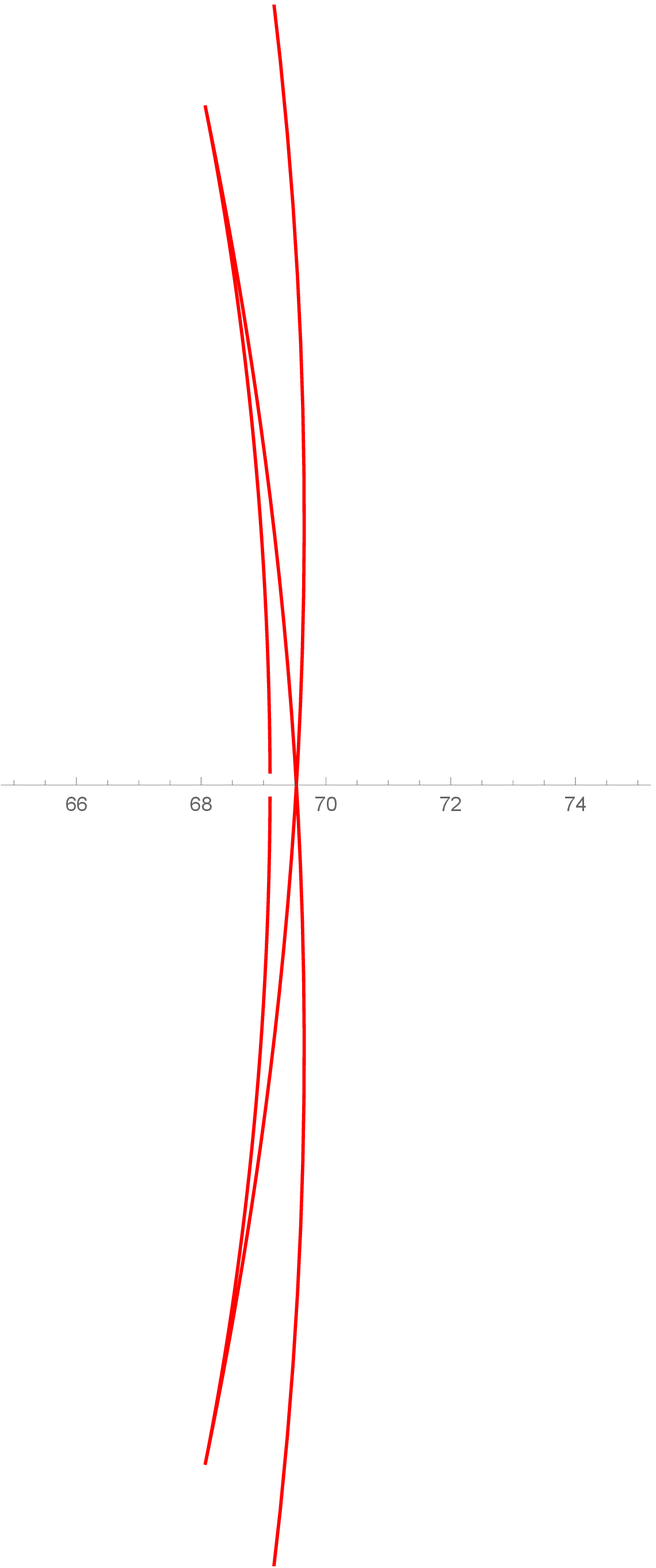}}
\end{minipage}
\caption{Graph of front under $t=100$}
\label{figt100}
\end{minipage}
\end{figure}
Here we describe an algorithm for constructing the fronts
$\gamma_t$; it has been implemented with use of the software package
\textsc{Wolfram Mathematica}  and intensively exploits the elliptic
transcendents. Except for visualizating the fronts and focal points,
the algorithm can be applied to the asymptotic theory and allows us
to determine the structure of Lagrangian manifolds
\cite[sect.~2.1]{Mishenko} and to construct the asymptotic solutions
of other problems. In particular, the problem
\eqref{WEq1}--\eqref{WEq2} is solved with use of the ray method or
the Maslov canonical operator theory, which is its generalization.
We will briefly describe the idea of Maslov's theory further bellow.
Moreover, as mentioned earlier, the asymptotic solution of the
problem \eqref{WEq1}--\eqref{WEq2} is localized in the neighborhood
of the front \cite{DobSekTirVol, DobShafTir, DobrNaz}.
\bigskip

\textbf{Algorithm:}

1. Fix first the parameters of the problem and, by way of
illustration, consider the case of underwater bank
(sect.~\ref{case1}). Constants $a$, $b$ determine the bottom shape
and constant $\xi$ define the source position. We take the following
values: $a=100$, $b=1$, and $\xi=0.5.$

2. For each fixed  $\psi \in [0,\pi]$, there exist infinitely many
roots of the equation
\begin{equation}\label{alg_p2}
\wp(p_0)=\xi^2-\delta+a.
\end{equation}
As values of $p_{01}(\psi)$ and $p_{02}(\psi)$ we choose the first
and second positive roots respectively (see Fig.~\ref{figp0}). In
this figure, the red point corresponds to $p_{01}$ and the blue
point corresponds to $p_{02}$. The quantity $\varkappa(\psi)$ is
determined by the formula \eqref{p0pkappa}, where we take again the
first positive root.

\begin{remark}\upshape
As mentioned above, the roots of the equation \eqref{alg_p2} need
not to be real. We have to seek the complex roots in one of the
forms $\ri\,p_0$, $p_0+\omega'$, or $\ri\,p_0+\omega$, where $\omega$
is the pure real period of the Weierstrassian $\wp$, and $\omega'$
is its pure imaginary period; the $p_0$ is real. Hence we obtain
some equation for $p_0$, which has infinitely many solutions. As
$p_{01}$ and $p_{02}$ we must take, respectively, the nearest first
and second positive roots (roots from the first positive branch).
\end{remark}

3. Fix further the point of time $t$. Let $p_{i}(\psi,t)$ $(i=1,2)$
be the first positive root of the equation $$t=\delta\cdot
(p-p_{0i}(\psi))-\zeta(p)+\zeta(p_{0i}(\psi)) \quad (i=1,2).$$ If
$p_{0i}$ is not real, then we should seek $p_{i}$ in the same manner
as $p_{0i}$; \ie, on the same edge of the parallelogram.

4. For each fixed $t$ we plot the curves in polar coordinates
$\varrho,\varphi$ by formula \eqref{SOL}. In the case $\psi \in
\left[0,\frac{\pi}{2}\right]$ we take $p_{01}$ and $p_1$ as
parameters, and in the case  $\psi \in
\left[\frac{\pi}{2},\pi\right]$ we take $p_{02}$ and $p_2$.

5. By symmetry, we reflect the obtained graph about the horizontal
axis $Ox$. Figure~\ref{figt100} exhibits a front at $t=100$; the
blue curve in this figure corresponds to $\psi \in
\left[0,\frac{\pi}{2}\right]$ and the red curves correspond to $\psi
\in \left[\frac{\pi}{2},\pi\right]$.

\begin{remark}\upshape
 Once more to emphasize, the inversion of integrals/functions
we have met in the algorithm is a necessary attribute for
representing the solutions. As is well known, the explicit
$t$-dynamics calls for inversion of integrals \cite{belokolos, br1}.
It is this point that creates the $t$-dependence per~se in the
Liouvillian integrability \cite{br1, blaszak}; even the trivial
harmonic oscillator is not an exception. In turn, the problem of
wave fronts in and of itself requires the elimination of initial
data from this $t$-dependence and, thereby, also deals with further
transcendental act. With that, this second transcendence is not a
trivial inversion of the first one; they have \emph{different}
nature. We thus arrive at the `double complication', which is
however an inevitable property of any analytic theory, not a
specific feature of the elliptic case. To get an analytic writing
for both the transcendences, as have been pointed out in
Introduction, is a nontrivial task. Our elliptic formulas correspond
to a situation when all the inversion problems have an explicit
function-shape, and it is these formulas that allow us to advance in
the problem of asymptotics; see below. It is in this sense that we
mean the problem as exactly solvable. In other words, just as the
class of functions in use is supplemented with
$\{\Theta,\wp,\zeta,\ldots\}$-objects \cite[sect.~6]{br1} in the
classical solvability \cite{belokolos}, so also we should supplement
the theory of fronts with new operations---solutions of our
transcendental equations. No matter what that procedure may be,
numerical or non.
\end{remark}

\subsection{Cauchy problem with localized initial conditions}\label{app2}

As just mentioned, the exact analytical formulas for the solution of
the problem \eqref{Ham_syst}--\eqref{cond} allow one to obtain an
expression for the asymptotic solution of the problem
\eqref{WEq1}--\eqref{WEq2} as $l \rightarrow 0$. Asymptotic formulas
for such problems were discussed, for example, in
\cite{DobrNaz,DobSekTirVol,DobShafTir}. These works are based on the
theory of the Maslov canonical operator. This operator describes
asymptotic solutions to a wide class of problems for differential
equations. In turn, the method of constructing the Maslov operator
is a generalization of the ray method and of the
Wentzel--Kramers--Brillouin--Jeffreys (WKBJ) approximation
\cite{Heading}.

By using the WKBJ approximation or an approximation similar to that
given in the Introduction, one can define the Hamilton function
corresponding to the differential operator. A surface formed by the
phase trajectories of the corresponding dynamical system does then
determine the Lagrangian manifold. If this manifold is projected
onto the physical space  ambiguously, \ie., when the focal points
and caustics do appear, the standard methods are not applicable.
Meantime, the canonical operator allows one to avoid this problem by
a rotating of the coordinate system. Thus, the construction of the
canonical operator is defined by the structure of the Lagrangian
manifold. Since the canonical-operator method exploits the rather
complicated mathematical technique even at the level of strict
definitions \cite[Part I]{MaslFed}, we do not expound it here and
use only some consequences stemmed from this method.

In this subsection, we present a result for one of the bottom types
considered above---the underwater ridge. That is, let function $D$
be defined by the expression \eqref{D2}; again and without loss of
generality, we assume that $g=1$ and $\mathbf x^{\circ}=(-\xi,0)$
with $\xi>0$. Then, using the result of the work
\cite{DobSekTirVol}, we arrive at the following statement.

\begin{proposition}
For $t>\sigma>0$ in the neighborhood of nonfocal point of the front
$\gamma_t$ the asymptotic solution to the problem
\eqref{WEq1}--\eqref{WEq2} has the form
\begin{equation}\label{asymp}
\begin{aligned}
w(\mathbf{x},t) &=
\left.\frac{l^{\sss\frac12}}{|\mathbf{X}'_{\psi}(\psi,t)|^\frac12}
\left(\frac{b(x_1^2+a)}{a(x_1^2+b)}\right)^{\frac14}
\mathrm{Re}\left[e^{-i\pi m(\psi,t)/2}
F\left(\frac{S(\mathbf{x},t)}{l},\psi\right)\right]
\right|_{\psi=\psi(\mathbf{x},t)}\\
&\quad+O\mbig[7](\frac{l}{|\mathbf{X}- \mathbf{x}^{\circ}|}\mbig[7]).
\end{aligned}
\end{equation}
Here, outside this region $w(\mathbf{x},t)=O
\big(\frac{l}{|\mathbf{X}-\mathbf{x}^\circ|}\big)$ and function $F$
is defined as follows
$$
F(z,\psi) = \frac{e^{-i\pi/4}}{\sqrt{2\pi}}\int\limits_{0}^{\infty}
\tilde{V}(s\mathbf{n}(\psi))e^{izs}ds\,,\quad
\tilde{V}(k)=\frac{1}{\sqrt{2\pi}}
\int\mkern-12mu\int_{\mathbb{R}^2}V(s) e^{-i\langle s,k\rangle}ds.
$$
The phase $S$ has the form
$$
S(\mathbf{x},t)=\big\langle\textbf{P}(\psi(\mathbf{x},t),t),
\mathbf{x}-\mathbf{X} (\psi(\mathbf{x},t),t)\big\rangle=
\left(\frac{b(X_1^2(\psi(\mathbf{x},t),t)+a)}
{a(X_1^2(\psi(\mathbf{x},t),t)+b)}\right)^{\frac12}L(\mathbf{x},t),
$$
where $L(\mathbf{x},t)$ is the distance between point $\mathbf{x}$
and the wave front $\gamma_t$. Moreover, we  put $L>0$ for the
external subset of the front and $L<0$ for the internal subset, the
$\psi(\mathbf{x},t)$  is defined by the condition
$\langle\mathbf{x}-\mathbf{X}(\psi,t),
\mathbf{X}'_{\psi}(\mathbf{x},t) \rangle =0$, and $m(\psi,t)$ is the
Maslov index coinciding with the Morse index of the trajectory
$(\mathbf{X}(\psi,\tau),\mathbf{P}(\psi,\tau)),\,\tau \in (0,t)$
under the notation
$\mathbf{X}(\psi,t)\DEF(x_1(\psi,t),x_2(\psi,t))$,
$\mathbf{P}(\psi,t)\DEF(p_1(\psi,t), p_2(\psi,t))$.  All the
functions $x_1,\,x_2,\,p_1,\,p_2$ are defined by the elliptic
formulas~\eqref{res_case2}.
\end{proposition}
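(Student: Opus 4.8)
The plan is to establish \eqref{asymp} by specialisation rather than by a from-scratch construction of the Maslov canonical operator: I would invoke the general asymptotic formula of \cite{DobSekTirVol} for the Cauchy problem \eqref{WEq1}--\eqref{WEq2} and feed into it the closed-form flow \eqref{res_case2} obtained in the preceding proposition. For a source localised on the scale $l$ the canonical-operator representation, \emph{near a nonfocal point of the front} $\gamma_t$, reduces to a single one-dimensional oscillatory integral carried along the ray that reaches $\mathbf x$, of the schematic shape $w=l^{\sss\frac12}\,\mathrm{Re}\big[e^{-i\pi m/2}\,\mathcal A(\psi,t)\,F(S/l,\psi)\big]+O(\cdot)$, evaluated at $\psi=\psi(\mathbf x,t)$. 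Here $F$ is the half-line Fourier transform of the source spectrum $\tilde V$ along $\mathbf n(\psi)$, the integer $m$ is the Maslov index, $S$ is the phase, and $\mathcal A$ is an amplitude assembled from the velocity $C(\mathbf x)$ and from the geometric spreading of the ray tube. The whole task is thus to match these four ingredients with the right-hand side of \eqref{asymp} and to check that \eqref{res_case2} renders each of them explicit; the objects $F$ and $\tilde V$ are copied verbatim, since the profile $V$ is untouched by the reduction.

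First I would settle the phase. Since $\mathbf P(\psi,t)=\partial s/\partial\mathbf x$ is the eikonal gradient and the front $\gamma_t$ is the level set $s=t$, the momentum $\mathbf P$ is orthogonal to $\gamma_t$ at $\mathbf X(\psi,t)$; hence for $\mathbf x$ near the front $\langle\mathbf P,\mathbf x-\mathbf X\rangle=|\mathbf P|\,L(\mathbf x,t)+O(L^2)$, with $L$ the signed distance to $\gamma_t$. It then remains to evaluate $|\mathbf P|$ on the trajectory. Because the Hamiltonian $H=|\mathbf p|\,C(\mathbf x)$ of \eqref{Ham} is conserved and equals $C(\mathbf x^\circ)$ on the data \eqref{incond_2} (where $|\mathbf p|=1$), one gets $|\mathbf P(\psi,t)|=C(\mathbf x^\circ)/C(\mathbf X(\psi,t))$; the same value is produced directly by the second conservation law in \eqref{int'} together with \eqref{InD5}. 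Inserting $C^2=(x_1^2+b)/(x_1^2+a)$ then turns the defining expression $\langle\mathbf P,\mathbf x-\mathbf X\rangle$ into the depth-ratio factor multiplying $L$ in the statement of $S$.

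Next I would treat the amplitude and the index. The amplitude splits into a transport factor and a spreading factor. Energy-flux conservation along a ray tube forces $\mathcal A\propto C(\mathbf x)^{-\frac12}\,|\mathbf X'_\psi|^{-\frac12}$: the quarter power of the depth, $C^{-\frac12}=D^{-\frac14}$, yields the prefactor $\big(b(x_1^2+a)/(a(x_1^2+b))\big)^{\frac14}$ up to the constant fixed by the data, while $|\mathbf X'_\psi|^{-\frac12}$ is the geometric-spreading term that also carries the power $l^{\sss\frac12}$ out of the one-dimensional stationary-phase integration over the launch angle $\psi$; both are read off by differentiating \eqref{res_case2} in $\psi$. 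The Maslov index $m(\psi,t)$ equals, by the general theory, the Morse index of the trajectory $(\mathbf X(\psi,\tau),\mathbf P(\psi,\tau))$, $\tau\in(0,t)$, which I would compute by locating the focal instants along that trajectory---the zeros of $\partial_\psi\mathbf X$---from the $\bop$-parametrisation and counting them with multiplicity. Assembling the phase, the two amplitude factors, the index and $F$ reproduces \eqref{asymp}, with the remainder $O\big(l/|\mathbf X-\mathbf x^\circ|\big)$ inherited from \cite{DobSekTirVol}.

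The hard part is not the algebra but verifying that the hypotheses of the cited theorem genuinely hold for our flow, so that the canonical operator does collapse to the single ray integral above: in the range $t>\sigma>0$ and away from focal points one must check that the ray map $(\psi,L)\mapsto\mathbf x$ is a local diffeomorphism ($\mathbf X'_\psi\neq0$), that $\psi(\mathbf x,t)$ is well defined by $\langle\mathbf x-\mathbf X(\psi,t),\mathbf X'_\psi\rangle=0$, and that $\gamma_t$ is there a smooth immersed curve. Equally delicate is the bookkeeping of the Maslov index: the transcendental equations \eqref{new2'} have infinitely many branches and $\bop,\po$ may be complex, to be taken on a fixed edge of the period parallelogram (cf.\ the Remark after Proposition~\ref{theorem1}); pinning down the focal points as the correct zeros of $\partial_\psi\mathbf X$ and thereby the integer $m$ is the step most prone to error, and it is exactly here that the explicit elliptic formulas \eqref{res_case2} are indispensable.
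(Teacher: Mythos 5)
Your route is exactly the paper's: the paper offers no proof of this proposition beyond the sentence ``using the result of the work \cite{DobSekTirVol}, we arrive at the following statement,'' i.e.\ it, like you, specialises the general canonical-operator asymptotics of \cite{DobSekTirVol} near a nonfocal point to the explicit elliptic flow \eqref{res_case2}, so your matching of phase, amplitude, Maslov index and profile $F$ is just a more detailed version of the same argument. One point to recheck: your conservation-law computation gives $|\mathbf P(\psi,t)|=C(\mathbf x^\circ)/C(\mathbf X)$, i.e.\ the ratio $\bigl((\xi^2+b)(X_1^2+a)\bigr)/\bigl((\xi^2+a)(X_1^2+b)\bigr)$ under the square root, which coincides with the factor $b(X_1^2+a)/\bigl(a(X_1^2+b)\bigr)$ appearing in \eqref{asymp} and in $S$ only when $\xi=0$; you assert the match without flagging this discrepancy, which may well be a slip in the paper's own formula rather than in your derivation, but it should not be passed over silently.
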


The expression for an asymptotic solution in a neighborhood of a
focal point is more cumbersome; we do not display it here. The case
of arbitrary $D$-functions is elaborated in the work
\cite{DobSekTirVol} and can be  adopted to the cases we consider.

 As was mentioned earlier, asymptotics similar to
\eqref{asymp} and a similar formula in the vicinity of focal points
can be obtained for the solution of the wave equation \eqref{WEq1}
with any bottom-depth function $D$ if it corresponds to an
integrable Hamiltonian. Such asymptotics is determined by the
solutions to this dynamical system and by the profile function $F$
according to the choice of the source function $V$. Of course, the
generic Hamiltonian system can be integrated by numerical methods.
But we stress that the exact solvability of Hamiltonians \eqref{Ham}
with the $D$-functions \eqref{D1}--\eqref{D2} entails the
\textit{completely analytical} formula for the asymptotics. These $D$'s
are not trivial and describe bottom inhomogeneities, the presence of
which gives birth to the focal points and caustics.

\subsection{Helmholtz equation with a localized right-hand
side}\label{app3}

Yet another application of the Hamiltonian system above comes from
the search for asymptotic solutions to the wave equation
\eqref{WEq3} with a localized right-hand side. Such problems were
discussed, for example, in works \cite{DobrNaz, DobrNazTir}. As
mentioned in Introduction, we assume that the source is harmonic in
time. Then the wave equation can be reduced to the Helmholtz
equation \eqref{in_WEq4}. We do also assume that at infinity the
sought-for solution satisfies conditions of the asymptotic limiting
absorption principle \cite[p.~407]{AnDobrNazRoul}, which is an
asymptotic analogue to the  standard principle of limiting
absorption \cite{Eidus}. As applied to our method this means that
the trajectories of the corresponding Hamiltonian system do not lie
in the compact space.

We present a result for the case when the bottom has the form of an
underwater bank:
\begin{equation}\label{c}
C^2(|\mathbf{x}|) = \frac{|\mathbf{x}|^2+b}{|\mathbf{x}|^2+a}.
\end{equation}
As elsewhere, without loss of generality, we put
$\mathbf{x}^{\circ}=(-\xi,0)$, $\xi >0$. Suppose that the right-hand
side has the form of a non-symmetric `Gauss bell':
\begin{equation}\label{V}
V(y)=\exp\left[-\frac{1}{2}\left(\frac{y_1^2}{c^2}+\frac{y_2^2}{d^2}
\right)\right],
\end{equation}
where $c$ and $d$ are constants.

To describe the asymptotic solution of equation \eqref{in_WEq4} with
the velocity \eqref{c} and right-hand side \eqref{V} as $\varepsilon
\rightarrow 0$, we use the Maupertuis--Jacobi principle and a result
of the work \cite{AnDobrNazRoul}. This article describes an approach
 for obtaining asymptotics of the stationary problems with
localized right-hand sides by use of the Maslov canonical operator
on a pair of Lagrangian manifolds.

The equation \eqref{in_WEq4} can be rewritten as follows
$$
\scr{H}(\mathbf{x},\hat{\mathbf{p}})v(\mathbf{x})=
\frac{1}{\varepsilon}V\mbig[7](\frac{\mathbf{x}-\mathbf{x}^{\circ}}
{\varepsilon}\mbig[7]),\quad \scr{H}(\mathbf{x},\hat{\mathbf{p}})=
-\omega^2+\big\langle\hat{\mathbf{p}}, C^2(|\mathbf{x}|)
\hat{\mathbf{p}}\big\rangle,\quad
\hat{\mathbf{p}}=\mbig[7]({-}i\varepsilon \frac{\pa }{\pa x_1},
-i\varepsilon \frac{\pa }{\pa x_2}\mbig[7]).
$$
Thus, the symbol of the operator has the form
$\scr{H}(\mathbf{x},\mathbf{p})=-\omega^2+C^2(|\mathbf{x}|)
(p_1^2+p_2^2).$ Notice that instead of
$\scr{H}(\mathbf{x},\mathbf{p})$ we  may consider  the Hamiltonian
$$
H(\mathbf{x},\mathbf{p})=\sqrt{\frac{|\mathbf{x}|^2+b}
{|\mathbf{x}|^2+a}}\cdot|\mathbf{p}|\,,
$$
which was studied in  sect.~\ref{case1}. Since the level surfaces
$\scr{H}(\mathbf{x},\mathbf{p})=0$ and
$H(\mathbf{x},\mathbf{p})=\omega$ coincide, the phase trajectories
of the vector fields $V_{\scr{H}}$ and $V_{H}$, in accord with the
Maupertuis--Jacobi principle, do also coincide. Therefore
$$
\Lambda_{+}=\Ccup_{\sss\tau\geqslant0}g^{\tau}_{\sss\scr{H}}
(L_0)=\Ccup_{\sss t\geqslant0}g^{t}_{\sss H}(L_0),
$$
where
\begin{gather}
\notag L_0=\big\{\mathbf{x}=(-\xi,0),\quad \mathbf{p}=
\frac{\omega}{C(\xi)}(\cos\psi,\sin\psi)\big\}\\
\intertext{and}
\label{Lambda_+}g^{t}_{H}(L_0)=
\Big(\varrho\cos\varphi\,,\;\varrho\sin\varphi\,,\;
\frac{\omega}{C(\xi)}u\cos\varphi-
\frac{\omega}{C(\xi)\varrho}v\sin\varphi\,,\;
\frac{\omega}{C(\xi)}u\sin\varphi+\frac{\omega}{C(\xi)\varrho}v
\cos\varphi\Big)
\end{gather}
 is a shift of $L_0$ along the trajectories of the vector
field $V_H$, and dependencies $\varrho(\bop,\psi)$,
$\varphi(\bop,\psi)$, $u(\bop,\psi)$, $v(\bop,\psi)$ are defined
by---again the elliptic---expressions \eqref{SOL}.
\begin{figure}[h!]
\begin{minipage}[h]{0.59\linewidth}
\includegraphics[width=1\linewidth,height=5cm]{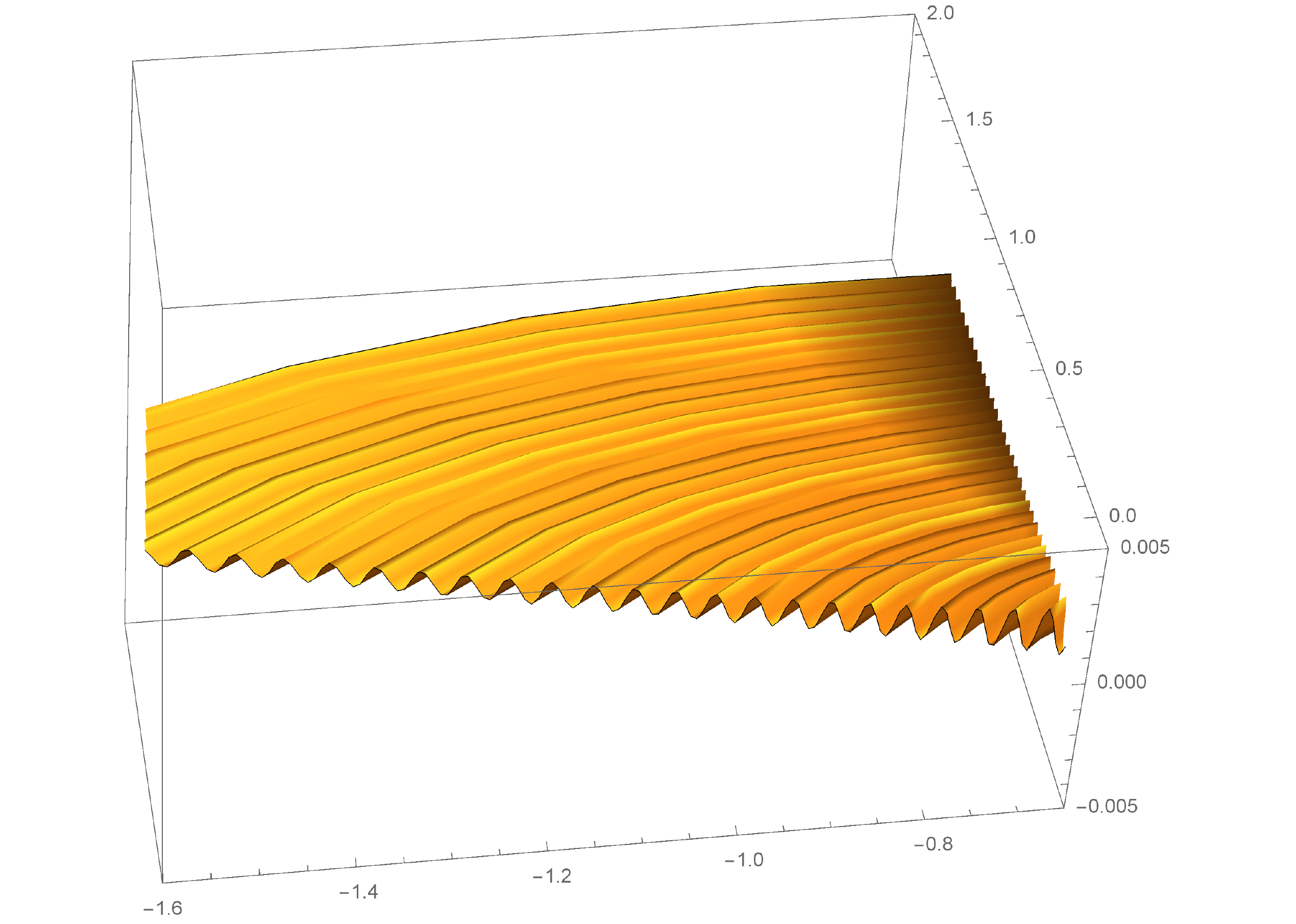}
\caption{Asymptotic solution of the equation \eqref{in_WEq4}
in the neighborhood of nonfocal points for $a=100, b=1, \xi=\frac12,
c=\frac{1}{200}, d=\frac{1}{300}, \varepsilon =\frac{1}{20}$}
\label{ex2_1}
\end{minipage}
\hfill
\begin{minipage}[h]{0.39\linewidth}
\center{\includegraphics[width=1\linewidth,height=5cm]{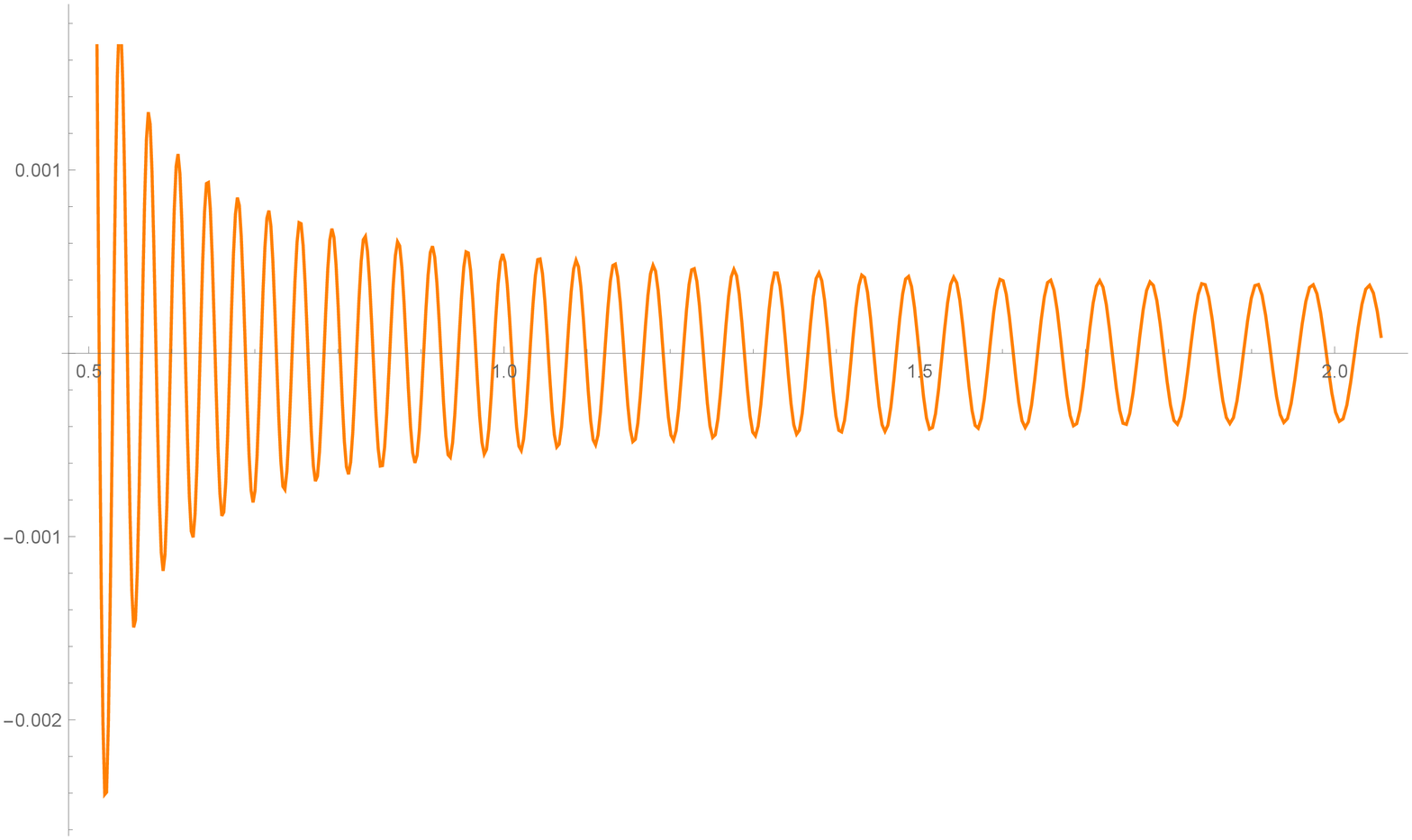}}
\caption{Asym\-pto\-tic solution in the direction
$\psi=\frac{9}{10}\pi$} \label{ex2_2}
\end{minipage}
\end{figure}

Every point on the manifold $\Lambda_{+}$ is defined by two
coordinates $(\psi, t)$, where $\psi$ is the coordinate on $L_0$ and
the coordinate $t$ characterizes time along the trajectory of the
field $V_{H}$. Note that the coordinate $t$ can be changed to the
coordinate $\tilde{\bop}=\bop-\po$. Using the equalities
$$
\int_{0}^{t} p \cdot
H_{p}\Big|_{\begin{smallmatrix}\sss\mathbf{x}=\mathbf{X}(\eta,\psi)\\
\sss\mathbf{p}=\mathbf{P}(\eta,\psi)
\end{smallmatrix}}d\eta=\omega\,t\,, \quad \frac{\pa \bop}{\pa \tau} =
\frac{2\omega}{\delta+\wp(\bop)}\,,
$$
one can derive the following result.

\begin{proposition}
The equation \eqref{in_WEq4} with the right-hand side \eqref{V} has
an asymptotic solution $v(\mathbf{x},\varepsilon)$, which satisfies
the asymptotic limiting absorption principle. If preimage of the
point $\mathbf{x}$ in $\Lambda_{+}$ is a unique and nonsingular
point $(\bop,\psi)$, then the principal part of the asymptotic
solution $v(\mathbf{x},\varepsilon)$ near $\mathbf{x}$ can be
written in the form
\begin{align*}
v(\mathbf{x},\varepsilon)&=cd\,\frac{C(\varrho)}{C(\xi)}
\frac{\sqrt{\pi
\big|\delta(\psi)+\wp(\tilde{\bop}+\po)\big|}}
{\sqrt{\varepsilon \omega|\tilde{J}(\tilde{\bop},\psi)|}}\,
\exp\left[-\frac{\omega^2(c^2\cos^2\psi+d^2\sin^2\psi)}{2C^2(\xi)}
\right]\\
&\quad\times\exp\left[\frac{i\omega}{\varepsilon}\big(\delta(\psi)
\tilde{\bop}-\zeta(\tilde{\bop}+\po)+
\zeta(\po)\big)-\frac{i\pi}{2}\left(\frac{1}{2}+
\mathrm{ind}\,\gamma\right)\right],
\end{align*}
where $\delta(\psi)$ is defined by \eqref{alpha} and
$\varrho=\varrho(\tilde{\bop}+\po,\psi)$, $\varphi
=\varphi(\tilde{\bop}+\po,\psi)$, $\tilde{J}(\tilde{\bop},\psi)=
\det\big|\frac{\pa(\varrho\cos\varphi,\varrho\sin\varphi)}
{\pa(\tilde{\bop},\psi)}\big|$. The $\mathrm{ind}\,\gamma$ is a
Maslov index of the path $\gamma$ joining nonsingular points
$(+0,\psi_0)$ and $(\tilde{\bop},\psi)$ on~$\Lambda_+$.
\end{proposition}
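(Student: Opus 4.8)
The plan is to realize $v$ as the Maslov canonical operator on the outgoing Lagrangian manifold $\Lambda_+$ applied to the semiclassically localized right-hand side, following the scheme of \cite{AnDobrNazRoul}, and then to make every ingredient explicit through the elliptic formulas \eqref{SOL}. First I would invoke the Maupertuis--Jacobi reduction already established above: the level set $\{\scr H=0\}$ coincides with $\{H=\omega\}$, so the manifold $\Lambda_+=\Ccup_{\sss\tau\geqslant0}g^{\tau}_{\sss\scr H}(L_0)$ is swept out by the trajectories of the homogeneous Hamiltonian $H$ of sect.~\ref{case1}. The asymptotic limiting-absorption hypothesis (trajectories leaving every compact set) selects precisely this outgoing branch and fixes $\Lambda_+$ uniquely. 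I would parametrize $\Lambda_+$ by $(\psi,\tau)$ and then convert to the elliptic coordinate $\tilde{\bop}=\bop-\po$, so that $\varrho,\varphi,u,v$ are supplied verbatim by \eqref{SOL}.

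Next I would assemble the non-oscillating and oscillating parts separately. The source factor is the Fourier transform $\tilde V$ of the Gauss bell \eqref{V}: a direct Gaussian integral gives $\tilde V(k)=cd\,\exp[-\tfrac12(c^2k_1^2+d^2k_2^2)]$, and evaluating it along the initial momentum $k=\tfrac{\omega}{C(\xi)}\mathbf n(\psi)$ on $L_0$ reproduces exactly the factor $cd\,\exp[-\tfrac{\omega^2(c^2\cos^2\psi+d^2\sin^2\psi)}{2C^2(\xi)}]$ of the statement. The phase is the action $\tfrac1\varepsilon\!\int\mathbf p\,d\mathbf x$ along $\Lambda_+$; by the homogeneity identity $\int_0^t\mathbf p\cdot H_\mathbf p\,d\eta=\omega t$ recorded above this equals $\tfrac{\omega t}{\varepsilon}$, and substituting the $t$-dynamics $t=\delta\tilde{\bop}-\zeta(\tilde{\bop}+\po)+\zeta(\po)$ read off from \eqref{p0pkappa} yields the elliptic phase $\tfrac{i\omega}{\varepsilon}\big(\delta\tilde{\bop}-\zeta(\tilde{\bop}+\po)+\zeta(\po)\big)$ displayed in the claim.

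The transport pre-exponential factor is where the genuine work lies. In the nonsingular case the canonical operator contributes $C(\varrho)/C(\xi)$ from the conservation law \eqref{ConsL1} together with $|\det J|^{-1/2}$, where $J$ is the Jacobian of the projection $\Lambda_+\to\mathbb R^2_\mathbf x$. I would first compute $J$ in the natural $(\tau,\psi)$ variables and then convert to the $(\tilde{\bop},\psi)$ variables of \eqref{SOL} by means of the reparametrization $\pa\bop/\pa\tau=2\omega/(\delta+\wp(\bop))$ stated just above the proposition; this substitution is exactly what turns $|\det J|^{-1/2}$ into $\sqrt{\pi\,|\delta+\wp(\tilde{\bop}+\po)|}\big/\sqrt{\varepsilon\,\omega\,|\tilde J(\tilde{\bop},\psi)|}$ with $\tilde J=\det|\pa(\varrho\cos\varphi,\varrho\sin\varphi)/\pa(\tilde{\bop},\psi)|$. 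Finally the factor $\exp[-\tfrac{i\pi}{2}(\tfrac12+\mathrm{ind}\,\gamma)]$ splits into two pieces: the $\tfrac12$ is the index of the half-line oscillatory integral encoding the outgoing character of the source (the same $e^{-i\pi/4}$ that appears in the profile $F$ of sect.~\ref{app2}), while $\mathrm{ind}\,\gamma$ is the Maslov index of the path $\gamma$ on $\Lambda_+$ counted in the canonical-operator calculus.

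I expect the principal obstacle to be the bookkeeping of the pre-exponential factor and the Maslov index rather than any single delicate estimate. Concretely, one must match the half-density and measure conventions on $\Lambda_+$ used in \cite{AnDobrNazRoul} with the elliptic parametrization, keep the Maupertuis--Jacobi rescaling factor $2\omega$ in its proper place so that the reparametrization $\pa\bop/\pa\tau=2\omega/(\delta+\wp)$ contributes precisely the advertised $\sqrt{|\delta+\wp|}\big/\sqrt{\varepsilon\omega}$ without spurious constants, and simultaneously reconcile the $\tfrac12$-shift of the Maslov index at the source with the sign conventions that fix the outgoing branch. Once these conventions are pinned down, the statement follows by specializing the general formula of \cite{AnDobrNazRoul} and inserting the explicit elliptic expressions \eqref{SOL}.
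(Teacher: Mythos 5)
Your proposal takes essentially the same route as the paper: the paper itself only sketches this derivation, recording exactly the two identities you use (the Maupertuis--Jacobi coincidence of $\{\scr H=0\}$ with $\{H=\omega\}$, the action identity $\int_0^t p\cdot H_p\,d\eta=\omega t$, and the reparametrization $\pa\bop/\pa\tau=2\omega/(\delta+\wp(\bop))$) and then invoking the general canonical-operator formula of \cite{AnDobrNazRoul} on $\Lambda_+$ parametrized by $(\psi,\tilde{\bop})$, with the Gaussian Fourier transform of \eqref{V} supplying the $cd\,\exp[-\omega^2(c^2\cos^2\psi+d^2\sin^2\psi)/2C^2(\xi)]$ factor. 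Your write-up is, if anything, more explicit than the paper's about where the pre-exponential bookkeeping and Maslov-index conventions enter.
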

The latter formula for asymptotics is illustrated in the
Figs.~\ref{ex2_1} and \ref{ex2_2}.

In conclusion, note that the solutions of the considered Hamiltonian
systems allow us to derive an analytical formula for the asymptotic
solution of the Helmholtz equation with a localized right-hand side
in a neighborhood of a regular point. In the general case---say, in
a neighborhood of caustics---the solution can be represented in the
form of the Maslov canonical operator on the Lagrangian manifold
$\Lambda_+$. This manifold is defined by the expression
\eqref{Lambda_+} and involves the exact solutions of the Hamiltonian
system.

\section*{Acknowledgments}

The authors wish to express their gratitude to S.~Dobrokhotov and
M.~Pavlov for useful discussions.

The first part of the research was supported by Tomsk State
University within the framework of the national project
``Priority-2030''. The research in the second section was supported by
the Federal Target Program (AAAA-A20-120011690131-7).

\medskip

\noindent
\parbox{7cm}{Yu.~V.~Brezhnev\\
Tomsk State University\\
Tomsk, 634050 Russia\\
\texttt{brezhnev@phys.tsu.ru}} \hfill
\parbox{9cm}{A.~V.~Tsvetkova\\
Ishlinsky Institute for Problems in Mechanics RAS\\
Moscow, 119526 Russia\\
\texttt{annatsvetkova25@gmail.com} }
\end{document}